\newtheorem{lemma}{Lemma}
\newtheorem{theorem}{Theorem}
\newtheorem{example}{Example}
\newcommand{\C}{\mathbb C}
\def\({\left(}
\def\){\right)}
\def\[{\begin{eqnarray}}
\def\]{\end{eqnarray}}
\begin{document}

\title{B\"acklund transformations of $Z_n$-Sine-Gordon systems}

\author{
\  \ Xueping Yang, Chuanzhong Li\dag\footnote{Corresponding author.}\\
\small Department of Mathematics,  Ningbo University, Ningbo, 315211, China\\
\dag Email:lichuanzhong@nbu.edu.cn}

\date{}

\maketitle

\abstract{
In this paper, from the algebraic reductions from the Lie algebra $gl(n,\C)$ to its commutative subalgebra $Z_n$,  we construct the general $Z_n$-Sine-Gordon  and $Z_n$-Sinh-Gordon systems which contain many  multi-component Sine-Gordon type and Sinh-Gordon type equations. Meanwhile, we give the B\"acklund transformations of the $Z_n$-Sine-Gordon  and $Z_n$-Sinh-Gordon equations which can generate new solutions from seed solutions. To see the $Z_n$-systems clearly, we consider the $Z_2$-Sine-Gordon  and $Z_3$-Sine-Gordon equations explicitly including their B\"acklund transformations, the nonlinear superposition formula and Lax pairs.
}\\

PACS numbers: 42.65.Tg, 42.65.Sf, 05.45.Yv, 02.30.Ik.\\
\noindent {{\bf Key words}  B\"acklund transformations, nonlinear superposition formula, $Z_n$-Sine-Gordon equation, $Z_n$-Sinh-Gordon equation}

\tableofcontents

\section {Introduction}

The Sine-Gordon equation and Sinh-Gordon equation are important integrable equations which describe many interesting phenomena including dynamics of coupled pendulums, Josephson junction arrays\cite{Josephson}, models of nonlinear excitations in complex systems in physics
and in living cellular structures, both intra-cellular
and inter-cellular\cite{Ivancevic}. The Sine-Gordon equation also accounts for continuum limits of the crystalline lattices \cite{Crystal}.

The scalar Sine-Gordon
equation originates in differential geometry and has profound
applications in physics and in life sciences which can be seen from a recent review \cite{Kundu}. Recently in \cite{Mikhailov}, a dressing method was used for the vector Sine-Gordon equation and its soliton interactions.

In \cite{zuojmp}, a new hierarchy called as a $Z_m$-KP hierarchy which take values in a maximal commutative subalgebra of $gl(m,\C)$ was constructed, meanwhile the relation between Frobenius manifolds and the dispersionless reduced $Z_m$-KP hierarchy was discussed. From the $Z_m$-KP hierarchy, one can derive some coupled equations like the coupled KdV equation which appears in \cite{Fordy,vectorkdv,kdv,Ivancevic}. We consider the Hirota quadratic equation of the commutative version of extended multi-component Toda hierarchy in \cite{EZTH} which should be useful in Frobenius manifold theory.
Because of logarithm terms, some extended Vertex operators are constructed in generalized Hirota bilinear equations which might be useful in topological field theory and Gromov-Witten theory.  Later we defined a new multi-component BKP hierarchy which takes values in a commutative subalgebra  of $gl(N,\mathbb C)$. After this, we give the gauge transformation of this commutative multi-component BKP (CMBKP) hierarchy \cite{ZBKP}. Meanwhile we construct a new constrained CMBKP hierarchy which contains some new integrable systems including coupled  KdV equations under a certain reduction. After this, the quantum torus symmetry and quantum torus constraint on the tau function of the commutative multi-component BKP hierarchy are constructed. Then a natural questions appears, i.e. what about the corresponding commutative multi-component version of other integrable systems like the well-known Sine-Gordon equation and Sinh-Gordon equation?
In this paper, we will answer this question by several steps. These steps contain their B\"acklund transformations which were studied a lot for the systems of the scalar Sine-Gordon equation \cite{SG,SHG}.

This paper is arranged as follows.  In Section 2, we recall some basic facts about the classical Sine-Gordon equation and then we give the construction of the general $Z_n$-Sine-Gordon equation, $Z_n$-Sinh-Gordon equations and their B\"acklund transformations in Section 3. What's more, we study on the B\"acklund transformations and nonlinear superposition formula of the $Z_2$-Sine-Gordon equation in section 4 and 5. In Section 6, we give the Lax equations of the $Z_2$-Sine-Gordon equation. In Section 7, we give the corresponding results on the  $Z_3$-Sine-Gordon equation.

\section{The classical Sine-Gordon equation}

To introduce the $Z_n$-Sine-Gordon equation, we firstly recall the original Sine-Gordon equation. The mathematical method of soliton equations was derived mainly through the zero curvature equation. The zero curvature equation of the original Sine-Gordon equation is as follows
\begin{equation}\label{hh}
M_{t}-N_{x}+[M,N]=0 , [M,N]=MN-NM.
\end{equation}

One can export many soliton equations by choosing appropriate  values of $M$ and $N$.  M. Ablowitz, D. J. Kaup, A. C. Newell and H. Segur  considered the zero curvature equation of the original Sine-Gordon equation with
\begin{equation}\label{ii}
M=\left(\begin{matrix}
-i\lambda&q\\
r&i\lambda\end{matrix}
\right),N=\left(\begin{matrix}
A&B\\
C&-A\end{matrix}
\right).
\end{equation}
Here the zero curvature equation \eqref{hh} becomes:
\begin{equation}\label{0o}
\begin{cases}
A_{x}=qc-rB,\\
q_{t}=B_{x}+2i\lambda B+2qA,\\
r_{t}=C_{x}-2i\lambda C-2rA.
\end{cases}
\end{equation}
For example, by choosing appropriate values of $A$, $B$, $C$ as:
\begin{equation}\label{PP}
A=\frac{g}{\lambda} , B=\frac{b}{\lambda} , C=\frac{c}{\lambda},
\end{equation}
where
\begin{equation}\label{jj}
g=\frac{i}{4}\cos{u},b=c=\frac{i}{4}\sin{u},q=r=-\frac{u_{x}}{2},
\end{equation}
 one can derive the well-known Sine-Gordon(SG) equation as following
 \begin{equation} \label{a}
u_{xt}= \sin{u}.
\end{equation}

Supposing $u$ is a solution of eq.\eqref{a}, and under the following transformation, $u'$ in the following will be another solution of eq.\eqref{a},
\begin{equation}\label{b}
(\frac{u'+u}{2})_{x}= a\sin{\frac{u'-u}{2}},
\end{equation}
\begin{equation}\label{c}
(\frac{u'-u}{2})_{t}= \frac{1}{a}\sin{\frac{u'+u}{2}}.
\end{equation}
This transformation is called the B\"acklund transformation of  SG equation. If $u$ is a solution of  SG eq.\eqref{a}, we can get another new solution $u'$ of SG equation by solving the above first-order equations (eq.\eqref{b} and eq.\eqref{c}).

\section{The $Z_n$-Sine-Gordon equation and $Z_n$-Sinh-Gordon equation}
To construct new multicomponent Sine-Gordon  and Sinh-Gordon systems which might have potential applications in biology such as DNA structural dynamics, we will consider the case when $u$ take values in a  commutative algebra $Z_n=\C[\Gamma]/(\Gamma^n)$ and $\Gamma=(\delta_{i,j+1})_{ij}\in gl(n,\C).$ This will lead to the case of $Z_n$-Sine-Gordon equation which is also equivalent to another $Z_n$-Sinh-Gordon equation. Let us firstly introduce the following lemma.
\begin{lemma}

The following identity holds
\begin{equation} \label{dddb}
\sin{\left(\begin{matrix}
a_{0}&0&0&\cdots&0\\
a_{1}&a_{0}&0&\cdots&0\\
a_{2}&a_{1}&a_{0}&\cdots&0\\
\vdots&\vdots&\vdots&\vdots&\vdots\\
a_{n}&a_{n-1}&a_{n-2}&\cdots&a_{0}\end{matrix}
\right)}=\left(\begin{matrix}
b_{0}&0&0&\cdots&0\\
b_{1}&b_{0}&0&\cdots&0\\
b_{2}&b_{1}&b_{0}&\cdots&0\\
\vdots&\vdots&\vdots&\vdots&\vdots\\
b_{n}&b_{n-1}&b_{n-2}&\cdots&b_{0}\end{matrix}
\right),
\end{equation}
where,
\begin{equation} \label{dddc}
b_{k}=\sum_{i_{1}k_{1}+i_{2}k_{2}+\cdots+i_{j}k_{j}=k}{\frac{a_{i_{1}}^{k_{1}}\cdot{a_{i_{2}}^{k_{2}}}\cdot{\cdots}\cdot{a_{i_{j}}^{k_{j}}}}{k_{1}!\cdot{k_{2}!}
\cdot{\cdots}\cdot{k_{j}!}}\sin{(a_{0}+\frac{\pi}{2}\sum _{i=1}^{j}{k_{i})}}}.
\end{equation}

\end{lemma}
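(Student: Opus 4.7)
The plan is to recognise the left-hand matrix as an element of the algebra of lower-triangular Toeplitz matrices, which is nothing but $\C[\Gamma]$ modulo a suitable power of $\Gamma$. Writing $A$ for the argument of $\sin$, I would decompose $A = a_0 I + N$, where
\[
N = a_1 \Gamma + a_2 \Gamma^2 + \cdots + a_n \Gamma^n
\]
is strictly lower triangular, hence nilpotent, and $a_0 I$ commutes with $N$.

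Since $a_0 I$ and $N$ commute, the matrix addition formula $\sin(a_0 I + N) = \sin(a_0)\cos(N) + \cos(a_0)\sin(N)$ applies. Expanding $\sin(N)$ and $\cos(N)$ via their (terminating) Taylor series and collecting terms by powers of $N$, the even-$k$ contributions build $\sin(a_0)\cos(N)$ and the odd-$k$ contributions build $\cos(a_0)\sin(N)$, giving the uniform formula
\[
\sin(A) \;=\; \sum_{k \ge 0} \frac{N^k}{k!}\,\sin\!\Big(a_0 + \frac{k\pi}{2}\Big).
\]

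Next I would apply the multinomial theorem to $N^k$:
\[
N^k \;=\; \sum_{k_1 + k_2 + \cdots + k_n = k} \frac{k!}{k_1!\, k_2! \cdots k_n!}\; a_1^{k_1}\, a_2^{k_2}\cdots a_n^{k_n}\; \Gamma^{\,1\cdot k_1 + 2\cdot k_2 + \cdots + n\cdot k_n}.
\]
Substituting into the previous display and gathering the coefficient of each $\Gamma^m$ produces exactly the stated expression for $b_m$: the constraint $\sum_\ell i_\ell k_\ell = m$ in the lemma's notation selects those monomials whose total $\Gamma$-degree equals $m$, the factorials combine as $k!/\prod k_\ell!$ divided by $k!$ to give $1/\prod k_\ell!$, and the trigonometric factor records which of $\pm\sin a_0, \pm\cos a_0$ arises. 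Finally, because $\sum_{m\ge 0} b_m \Gamma^m$ is by construction the lower-triangular Toeplitz matrix with entries $b_0, b_1, \ldots$, this proves the identity.

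The computation is bookkeeping; the only real obstacle is matching the author's somewhat loose notation, where the indexing set for $(i_1,\ldots,i_j;k_1,\ldots,k_j)$ is not spelled out. I would interpret it in the natural way, namely as running over choices of distinct positive indices $i_1<\cdots<i_j$ together with positive exponents $k_\ell \ge 1$ satisfying $\sum i_\ell k_\ell = m$, which is precisely the multinomial expansion above after discarding zero exponents. With that interpretation the two sides agree term by term.
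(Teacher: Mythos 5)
Your proof is correct and follows essentially the same route as the paper: both expand the matrix sine as a power series and exploit the nilpotency of $\Gamma=(\delta_{i,j+1})_{ij}$, the paper stopping at ``a direct calculation on the $k$-th diagonal'' where you supply the actual mechanism (the commuting decomposition $A=a_0I+N$, the identity $\sin(A)=\sum_{k\ge 0}\frac{N^k}{k!}\sin(a_0+\frac{k\pi}{2})$, and the multinomial expansion of $N^k$). Your reading of the indexing set --- distinct positive indices $i_1<\cdots<i_j$ with exponents $k_\ell\ge 1$ and $\sum_\ell i_\ell k_\ell=k$ --- is the one needed for the sum to be finite and is consistent with the paper's usage.
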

\begin{proof}
In order to prove the above conclusion,  a direct calculation can lead to
\begin{equation} \label{fffm}
\sin{(a_{0}E+a_{1}\Gamma^{1}+\cdots+a_{n}\Gamma^{n})}=
(a_{0}E+a_{1}\Gamma^{1}+\cdots+a_{n}\Gamma^{n})-\frac{1}{3!}(a_{0}E
+a_{1}\Gamma^{1}+\cdots+a_{n}\Gamma^{n})^{3}+\ldots.
\end{equation}

Here we have used the results as
\begin{equation} \label{fffn}
\Gamma^{k}=(\delta_{i,j+k})_{ij},\ \ k<n; \Gamma^{n}=0_{n\times n}.
\end{equation}
Through a direct calculation, we can finish the proof by choosing the specific $k$-th diagonal position of both matrices of two  sides of the identity \eqref{dddb}.
\end{proof}

By a tedious calculation, we can get the following
$Z_n$-Sine-Gordon equation :
\begin{equation} \label{ddda}
(a_k)_{xt}=\sum_{i_{1}k_{1}+i_{2}k_{2}+\cdots+i_{j}k_{j}=k}{\frac{a_{i_{1}}^{k_{1}}\cdot{a_{i_{2}}^{k_{2}}}\cdot{\cdots}\cdot{a_{i_{j}}^{k_{j}}}}{k_{1}!\cdot{k_{2}!}
\cdot{\cdots}\cdot{k_{j}!}}\sin{(a_{0}+\frac{\pi}{2}\sum _{i=1}^{j}{k_{i})}}},\ \ 0\leq k\leq n.
\end{equation}
Also by a tedious calculation, we can derive the B\"acklund transformation of
the $Z_n$-Sine-Gordon equation in the following theorem.
\begin{theorem}\label{znbacklund}
The $Z_n$-Sine-Gordon equation \eqref{ddda}
 has the following B\"acklund transformation
\begin{equation}\label{backlund2}
\begin{cases}
(\frac{u_k'+u_k}{2})_{x}=a\sum_{i_{1}k_{1}+i_{2}k_{2}+\cdots+i_{j}k_{j}=k}{\frac{(u_{i_{1}}'-u_{i_{1}})^{k_{1}}
\cdot{(u_{i_{2}}'-u_{i_{2}})}^{k_{2}}\cdot{\cdots}\cdot{(u_{i_{j}}'-u_{i_{j}})}^{k_{j}}}{2^{k_{1}+k_{2}+\cdots+k_{j}}\cdot{k_{1}!}\cdot{k_{2}!}
\cdot{\cdots}\cdot{k_{j}!}}\sin{(\frac{u_0'-u_0}{2}+\frac{\pi}{2}\sum _{i=1}^{j}{k_{i})}}},\\
(\frac{u_k'-u_k}{2})_{t}=\frac{1}{a}\sum_{i_{1}k_{1}+i_{2}k_{2}+\cdots+i_{j}k_{j}=k}{\frac{(u_{i_{1}}'-u_{i_{1}})^{k_{1}}
\cdot{(u_{i_{2}}'-u_{i_{2}})}^{k_{2}}\cdot{\cdots}\cdot{(u_{i_{j}}'-u_{i_{j}})}^{k_{j}}}{2^{k_{1}+k_{2}+\cdots+k_{j}}\cdot{k_{1}!}\cdot{k_{2}!}
\cdot{\cdots}\cdot{k_{j}!}}\sin{(\frac{u_0'-u_0}{2}+\frac{\pi}{2}\sum _{i=1}^{j}{k_{i})}}}.
\end{cases}
\end{equation}
\end{theorem}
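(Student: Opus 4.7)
The plan is to lift the classical scalar Bäcklund pair \eqref{b}--\eqref{c} to the commutative algebra $Z_n = \C[\Gamma]/(\Gamma^{n+1})$ and then extract the $\Gamma^k$-coefficients via Lemma 1. Assemble the matrix-valued potentials
\[
U = \sum_{k=0}^{n} u_k \Gamma^k, \qquad U' = \sum_{k=0}^{n} u'_k \Gamma^k,
\]
and consider the formal Bäcklund pair
\[
\bigl((U'+U)/2\bigr)_x = a\,\sin\bigl((U'-U)/2\bigr), \qquad \bigl((U'-U)/2\bigr)_t = a^{-1}\sin\bigl((U'+U)/2\bigr).
\]
This is well defined because $Z_n$ is commutative and thus closed under the power series defining $\sin$ and $\cos$; by Lemma 1, the right-hand sides are again lower-triangular Toeplitz matrices.

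Since all manipulations take place inside a commutative ring, the standard compatibility argument for the scalar SG Bäcklund transformation transfers verbatim. Differentiating the first equation in $t$ and the second in $x$, substituting, and applying the product-to-sum identity $2\cos A \sin B = \sin(A+B) - \sin(A-B)$ (valid inside $Z_n$) yields $(U'+U)_{xt} = \sin U' + \sin U$ and $(U'-U)_{xt} = \sin U' - \sin U$, whence $U_{xt}=\sin U$ and $U'_{xt}=\sin U'$. By Lemma 1, these matrix equations are exactly the component-wise $Z_n$-Sine-Gordon system \eqref{ddda}, so the lifted pair is genuinely a Bäcklund transformation of \eqref{ddda}.

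To reach the stated component formulas, apply Lemma 1 to each right-hand side. The matrix $(U'-U)/2$ has the Toeplitz form of Lemma 1 with diagonal entry $a_0=(u'_0-u_0)/2$ and subdiagonal entries $a_i=(u'_i-u_i)/2$; substituting these into \eqref{dddc} produces the multinomial sums on the right-hand sides of \eqref{backlund2}, with the prefactor $2^{-(k_1+\cdots+k_j)}$ arising exactly from pulling the halvings out of $a_{i_1}^{k_1}\cdots a_{i_j}^{k_j}$. The analogous computation for $(U'+U)/2$ in the second equation is structurally identical. Equating $\Gamma^k$-coefficients on the two sides of each lifted equation then yields \eqref{backlund2}.

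The only real obstacle is notational bookkeeping: keeping straight which argument ($(U'+U)/2$ versus $(U'-U)/2$) feeds into each $\sin$, and verifying that the multinomial coefficients $1/(k_1!\cdots k_j!)$ and the phase shift $\frac{\pi}{2}\sum k_i$ descend from \eqref{dddc} exactly as stated. No new analytic input beyond Lemma 1 and the scalar identities for $\sin$ and $\cos$ is required, since the commutativity of $Z_n$ reduces everything to entrywise manipulations.
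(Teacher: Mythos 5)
Your proposal is correct and follows essentially the same route as the paper: differentiate one half of the B\"acklund pair in the other variable, substitute the companion relation, apply the product-to-sum identity (legitimate inside the commutative algebra $Z_n$), conclude $U'_{xt}=\sin U'$ from the hypothesis $U_{xt}=\sin U$, and read off $\Gamma^k$-components via Lemma 1 --- the paper merely carries this out component-wise rather than at the algebra-valued level. In doing so you have tacitly corrected what appears to be a typo in the stated transformation \eqref{backlund2}, whose second relation should involve $\sin\bigl(\frac{u_0'+u_0}{2}+\cdots\bigr)$ and factors $(u_{i}'+u_{i})$ as in the scalar prototype \eqref{c} and the $Z_2$ case \eqref{backlund}; this corrected version is also the one the paper's own computation actually uses.
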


\begin{proof}

If $(u_k,  0\leq k\leq n)$ are solutions of the $Z_n$-Sine-Gordon equation, we  will prove that under the transformations \eqref{backlund2}, $(u'_k,  0\leq k\leq n)$ are also  solutions of the eq.\eqref{ddda}.
Here we give the proof of the above argument by direct calculation using the transformation \eqref{backlund2},
\[\notag
&&(\frac{u_k'+u_k}{2})_{xt}\\ \notag
&=&a\left(\sum_{i_{1}k_{1}+i_{2}k_{2}+\cdots+i_{j}k_{j}=k}{\frac{(u_{i_{1}}'-u_{i_{1}})^{k_{1}}
\cdot{(u_{i_{2}}'-u_{i_{2}})}^{k_{2}}\cdot{\cdots}\cdot{(u_{i_{j}}'-u_{i_{j}})}^{k_{j}}}{2^{k_{1}+k_{2}+\cdots+k_{j}}\cdot{k_{1}!}\cdot{k_{2}!}
\cdot{\cdots}\cdot{k_{j}!}}\sin{(\frac{u_0'-u_0}{2}+\frac{\pi}{2}\sum _{i=1}^{j}{k_{i})}}}\right)_{t}\\ \notag
&=&a\left(\sum_{i_{1}k_{1}+i_{2}k_{2}+\cdots+i_{j}k_{j}=k}\sum_{s=1}^j\frac{(u_{i_{1}}'-u_{i_{1}})^{k_{1}}
\cdot{\cdots}\cdot k_{s}{(u_{i_{s}}'-u_{i_{s}})}^{k_{s}-1}(u_{i_{s}}'-u_{i_{s}})_t\cdot{\cdots}\cdot{(u_{i_{j}}'-u_{i_{j}})}^{k_{j}}}{2^{k_{1}+k_{2}+\cdots+k_{j}}\cdot{k_{1}!}\cdot{k_{2}!}
\cdot{\cdots}\cdot{k_{j}!}}\right.\\
\notag && \left.\sin(\frac{u_0'-u_0}{2}+\frac{\pi}{2}\sum _{i=1}^{j}{k_{i})}\right.\\ \notag
&&\left.+\sum_{i_{1}k_{1}+i_{2}k_{2}+\cdots+i_{j}k_{j}=k}{\frac{(u_{i_{1}}'-u_{i_{1}})^{k_{1}}
\cdot{\cdots}\cdot{(u_{i_{j}}'-u_{i_{j}})}^{k_{j}}(u_0'-u_0)_t}{2^{k_{1}+k_{2}+\cdots+k_{j}+1}\cdot{k_{1}!}\cdot{k_{2}!}
\cdot{\cdots}\cdot{k_{j}!}}\cos{(\frac{u_0'-u_0}{2}+\frac{\pi}{2}\sum _{i=1}^{j}{k_{i})}}}\right)\\ \notag
&=&\frac{1}{2}\sum_{k=1}^n\sum_{i_{1}k_{1}+i_{2}k_{2}+\cdots+i_{j}k_{j}=k}{\frac{u_{i_{1}}^{k_{1}}\cdot{u_{i_{2}}^{k_{2}}}\cdot{\cdots}\cdot{u_{i_{j}}^{k_{j}}}}{k_{1}!\cdot{k_{2}!}
\cdot{\cdots}\cdot{k_{j}!}}\sin{(u_{0}+\frac{\pi}{2}\sum _{i=1}^{j}{k_{i})}}}\Gamma^k
\\ \notag
&&+\frac{1}{2}\sum_{k=1}^n\sum_{i_{1}k_{1}+i_{2}k_{2}+\cdots+i_{j}k_{j}=k}{\frac{{u'}_{i_{1}}^{k_{1}}\cdot{{u'}_{i_{2}}^{k_{2}}}\cdot{\cdots}\cdot{{u'}_{i_{j}}^{k_{j}}}}{k_{1}!\cdot{k_{2}!}
\cdot{\cdots}\cdot{k_{j}!}}\sin{(u'_{0}+\frac{\pi}{2}\sum _{i=1}^{j}{k_{i})}}}\Gamma^k.\]

Then because $u_k$ are  solutions of the $Z_n$-Sine-Gordon equation, i.e.
\[\notag
u_{kxt}=\sum_{k=1}^n\sum_{i_{1}k_{1}+i_{2}k_{2}+\cdots+i_{j}k_{j}=k}{\frac{u_{i_{1}}^{k_{1}}\cdot{u_{i_{2}}^{k_{2}}}\cdot{\cdots}\cdot{u_{i_{j}}^{k_{j}}}}{k_{1}!\cdot{k_{2}!}
\cdot{\cdots}\cdot{k_{j}!}}\sin{(u_{0}+\frac{\pi}{2}\sum _{i=1}^{j}{k_{i})}}},\]
therefore we can derive the $u_k'$ are also solutions of the $Z_n$-Sine-Gordon equation
\[\notag
u'_{kxt}=\sum_{k=1}^n\sum_{i_{1}k_{1}+i_{2}k_{2}+\cdots+i_{j}k_{j}=k}{\frac{{u'}_{i_{1}}^{k_{1}}\cdot{{u'}_{i_{2}}^{k_{2}}}\cdot{\cdots}\cdot{{u'}_{i_{j}}^{k_{j}}}}{k_{1}!\cdot{k_{2}!}
\cdot{\cdots}\cdot{k_{j}!}}\sin{(u'_{0}+\frac{\pi}{2}\sum _{i=1}^{j}{k_{i})}}}.\]
 Then we can say that the transformation \eqref{backlund2} is the B\"acklund transformation of the $Z_n$-Sine-Gordon equation.
\end{proof}
By the B\"acklund transformation of the $Z_n$-Sine-Gordon equation,
we can get other new solutions of the $Z_n$-Sine-Gordon equation. One can see it in the following example.
\begin{example}
Take the seed solution as $a_k=0,\ \ 0\leq k\leq n,$
then new solutions $a'_k$ satisfy
\end{example}
\begin{equation}\label{fffa}
a'_{kx}=2a\sum_{i_{1}k_{1}+i_{2}k_{2}+\cdots+i_{j}k_{j}=k}{\frac{a_{i_{1}}^{k_{1}}\cdot{a_{i_{2}}^{k_{2}}}\cdot{\cdots}\cdot{a_{i_{j}}^{k_{j}}}}{k_{1}!\cdot{k_{2}!}
\cdot{\cdots}\cdot{k_{j}!}}\sin{(a_{0}+\frac{\pi}{2}\sum _{i=1}^{j}{k_{i})}}},
\end{equation}
\begin{equation}\label{fffb}
a'_{kt}=\frac{2}{a}\sum_{i_{1}k_{1}+i_{2}k_{2}+\cdots+i_{j}k_{j}=k}{\frac{a_{i_{1}}^{k_{1}}\cdot{a_{i_{2}}^{k_{2}}}\cdot{\cdots}\cdot{a_{i_{j}}^{k_{j}}}}{k_{1}!\cdot{k_{2}!}
\cdot{\cdots}\cdot{k_{j}!}}\sin{(a_{0}+\frac{\pi}{2}\sum _{i=1}^{j}{k_{i})}}}.
\end{equation}
In some special cases, the results  are as follows
\\
when $n=0$, $a_{0}^{'}=4\arctan{e^{ax+a^{-1}t}}$,
\\
when $n=1$, $a_{1}^{'}=2c\frac{e^{(ax+a^{-1}t)}}{e^{(2ax+2a^{-1}t)}+1}$,
\\
when $n=2$, $a_{2}^{'}=-(\frac{1}{(e^{2ax+2a^{-1}t)}+1}+c)\frac{e^{(ax+a^{-1}t)}}{e^{(2ax+2a^{-1}t)}+1}$.

It is well known that the classical Sine-Gordon equation is equivalent to Sinh-Gordon equation.
Therefore now let us also consider the case of the  $Z_n$-Sinh-Gordon equation in similar ways as in the following lemma.
\begin{lemma}
The following identity holds
\begin{equation} \label{dddg}
\sinh{\left(\begin{matrix}
a_{0}&0&0&\cdots&0\\
a_{1}&a_{0}&0&\cdots&0\\
a_{2}&a_{1}&a_{0}&\cdots&0\\
\vdots&\vdots&\vdots&\vdots&\vdots\\
a_{n}&a_{n-1}&a_{n-2}&\cdots&a_{0}\end{matrix}
\right)}=\left(\begin{matrix}
c_{0}&0&0&\cdots&0\\
c_{1}&c_{0}&0&\cdots&0\\
c_{2}&c_{1}&c_{0}&\cdots&0\\
\vdots&\vdots&\vdots&\vdots&\vdots\\
c_{n}&c_{n-1}&c_{n-2}&\cdots&c_{0}\end{matrix}
\right),
\end{equation}
where
\begin{equation} \label{dddh}
c_{k}=\sum_{i_{1}k_{1}+i_{2}k_{2}+\cdots+i_{j}k_{j}=k}{\frac{a_{i_{1}}^{k_{1}}\cdot{a_{i_{2}}^{k_{2}}}\cdot{\cdots}
\cdot{a_{i_{j}}^{k_{j}}}}{k_{1}!\cdot{k_{2}!}\cdot{\cdots}\cdot{k_{j}!}}
\frac{e^{a_{0}}-(-1)^{\sum_{t=1}^{j}{k_{t}}}\cdot{e^{-a_{0}}}}{2}}.
\end{equation}
\end{lemma}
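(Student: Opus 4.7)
The plan is to mirror the proof of Lemma 1, but to replace the Taylor series of $\sin$ by the identity $\sinh(X)=\tfrac12(e^{X}-e^{-X})$, since this representation will make the alternating sign $(-1)^{\sum k_t}$ appear automatically. Writing the matrix on the left-hand side of \eqref{dddg} as $X=a_0 E+N$ with $N=a_1\Gamma+a_2\Gamma^2+\cdots+a_n\Gamma^n$, I would first note that $a_0 E$ is central, so $e^{\pm X}=e^{\pm a_0}\,e^{\pm N}$, and therefore
\begin{equation*}
\sinh(X)=\frac{e^{a_0}\,e^{N}-e^{-a_0}\,e^{-N}}{2}.
\end{equation*}

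Next, I would exploit nilpotency: by \eqref{fffn} we have $\Gamma^n=0$, so the exponential series for $e^{\pm N}$ truncates to a finite polynomial in $\Gamma$. Expanding $N^m=(a_1\Gamma+\cdots+a_n\Gamma^n)^m$ by the multinomial theorem and collecting powers of $\Gamma$ yields
\begin{equation*}
e^{\pm N}=\sum_{k\ge 0}\left(\sum_{k_1+2k_2+\cdots+nk_n=k}\frac{(\pm a_1)^{k_1}\cdots(\pm a_n)^{k_n}}{k_1!\,k_2!\cdots k_n!}\right)\Gamma^{k},
\end{equation*}
and the sign on the $(-)$ side factors as $(-1)^{k_1+k_2+\cdots+k_n}$. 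Combining these two exponentials with the prefactors $e^{\pm a_0}$, the coefficient of $\Gamma^k$ in $\sinh(X)$ becomes
\begin{equation*}
\sum_{k_1+2k_2+\cdots+nk_n=k}\frac{a_1^{k_1}\cdots a_n^{k_n}}{k_1!\cdots k_n!}\cdot\frac{e^{a_0}-(-1)^{k_1+\cdots+k_n}e^{-a_0}}{2},
\end{equation*}
which is exactly the right-hand side of \eqref{dddh} after translating notations: the compact index set $i_1 k_1+i_2 k_2+\cdots+i_j k_j=k$ is just the repackaging of the multi-index $(k_1,\dots,k_n)$ obtained by keeping only the nonzero $k_\ell$'s and relabeling. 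The empty partition at $k=0$ reproduces $c_0=\sinh(a_0)$.

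Finally, since $\Gamma^k=(\delta_{i,j+k})_{ij}$ places $1$'s on the $k$-th subdiagonal, the identification of the $k$-th coefficient with the entire $k$-th subdiagonal of a Toeplitz matrix is automatic, completing the argument. The computation is essentially parallel to Lemma~1, with no serious obstacle; the only delicate point is the bookkeeping of the sign $(-1)^{\sum k_t}$, which is precisely what distinguishes $\sinh$ from $\sin$ (where the expansion of $\sin$ instead produces a shift by $\tfrac{\pi}{2}\sum k_i$ inside the sine).
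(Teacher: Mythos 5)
Your proof is correct. Note that the paper actually states this lemma without any proof at all; the only written argument is for the $\sin$ analogue (Lemma~1), where the authors expand the Taylor series $\sin(a_0E+N)=(a_0E+N)-\tfrac{1}{3!}(a_0E+N)^3+\cdots$ directly, invoke nilpotency of $\Gamma$, and compare subdiagonals. Your route is organized differently and, in my view, better: by writing $\sinh(X)=\tfrac12(e^{X}-e^{-X})$ and factoring $e^{\pm X}=e^{\pm a_0}e^{\pm N}$ (legitimate since $a_0E$ is central), the multinomial expansion of $e^{\pm N}$ makes the sign $(-1)^{k_1+\cdots+k_n}$ appear automatically as the parity of the multi-index, whereas the raw Taylor expansion of $\sinh$ would require reassembling the even and odd parts of the series by hand to recover $\tfrac12\bigl(e^{a_0}-(-1)^{\sum k_t}e^{-a_0}\bigr)$. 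The translation between your multi-index $(k_1,\dots,k_n)$ with $\sum\ell k_\ell=k$ and the paper's index set $i_1k_1+\cdots+i_jk_j=k$ is exactly as you describe, and the identification of the coefficient of $\Gamma^k$ with the $k$-th subdiagonal is immediate. One cosmetic point you inherit from the paper: since the matrix in \eqref{dddg} is $(n+1)\times(n+1)$ with entries $a_0,\dots,a_n$, the correct truncation is $\Gamma^{n+1}=0$ rather than $\Gamma^{n}=0$ (otherwise the term $a_n\Gamma^n$ you keep in $N$ would already vanish); this does not affect the argument, but you should state the nilpotency order consistently with the size of the matrix.
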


Also by a tedious calculation, we can derive the $Z_n$-Sinh-Gordon equation
\begin{equation} \label{ddda112}
(a_k)_{xt}=\sum_{i_{1}k_{1}+i_{2}k_{2}+\cdots+i_{j}k_{j}=k}{\frac{a_{i_{1}}^{k_{1}}\cdot{a_{i_{2}}^{k_{2}}}\cdot{\cdots}
\cdot{a_{i_{j}}^{k_{j}}}}{k_{1}!\cdot{k_{2}!}\cdot{\cdots}\cdot{k_{j}!}}
\frac{e^{a_{0}}-(-1)^{\sum_{t=1}^{j}{k_{t}}}\cdot{e^{-a_{0}}}}{2}},\ \ 0\leq k\leq n,
\end{equation}
and its B\"acklund transformation  in the following theorem.
\begin{theorem}
The $Z_n$-Sinh-Gordon equation \eqref{ddda112}
 has the following B\"acklund transformation
\begin{equation}\label{backlund3}
\begin{cases}
(\frac{u_k'+u_k}{2})_{x}=a\sum_{i_{1}k_{1}+i_{2}k_{2}+\cdots+i_{j}k_{j}=k}{\frac{(u_{i_{1}}'-u_{i_{1}})^{k_{1}}
\cdot{(u_{i_{2}}'-u_{i_{2}})}^{k_{2}}\cdot{\cdots}\cdot{(u_{i_{j}}'-u_{i_{j}})}^{k_{j}}}{2^{k_{1}+k_{2}+\cdots+k_{j}}\cdot{k_{1}!}\cdot{k_{2}!}
\cdot{\cdots}\cdot{k_{j}!}}\frac{e^{\frac{u_0'-u_0}{2}}-(-1)^{\sum_{t=1}^{j}{k_{t}}}\cdot{e^{-\frac{u_0'-u_0}{2}}}}{2}},\\
(\frac{u_k'-u_k}{2})_{t}=\frac{1}{a}\sum_{i_{1}k_{1}+i_{2}k_{2}+\cdots+i_{j}k_{j}=k}{\frac{(u_{i_{1}}'-u_{i_{1}})^{k_{1}}
\cdot{(u_{i_{2}}'-u_{i_{2}})}^{k_{2}}\cdot{\cdots}\cdot{(u_{i_{j}}'-u_{i_{j}})}^{k_{j}}}{2^{k_{1}+k_{2}+\cdots+k_{j}}\cdot{k_{1}!}\cdot{k_{2}!}
\cdot{\cdots}\cdot{k_{j}!}}\frac{e^{\frac{u_0'-u_0}{2}}-(-1)^{\sum_{t=1}^{j}{k_{t}}}\cdot{e^{-\frac{u_0'-u_0}{2}}}}{2}}.
\end{cases}
\end{equation}
\end{theorem}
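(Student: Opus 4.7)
The plan is to mirror the proof of Theorem~\ref{znbacklund} step by step, replacing the trigonometric kernel $\sin(\tfrac{u_0'-u_0}{2}+\tfrac{\pi}{2}\sum_i k_i)$ by the hyperbolic kernel
\begin{equation*}
K_m:=\frac{e^{(u_0'-u_0)/2}-(-1)^m e^{-(u_0'-u_0)/2}}{2},\qquad m:=\sum_t k_t,
\end{equation*}
which reduces to $\sinh\!\bigl(\tfrac{u_0'-u_0}{2}\bigr)$ for $m$ even and to $\cosh\!\bigl(\tfrac{u_0'-u_0}{2}\bigr)$ for $m$ odd, exactly paralleling the fact that $\sin(\theta+\tfrac{\pi}{2}m)$ alternates between $\sin\theta$ and $\cos\theta$ up to sign. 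Assuming $(u_k)_{0\leq k\leq n}$ solves \eqref{ddda112}, I would differentiate the first relation of \eqref{backlund3} with respect to $t$, so that the left side becomes $\tfrac{1}{2}(u_k'+u_k)_{xt}$ and the right side becomes, by the Leibniz rule applied to each summand $\tfrac{1}{2^{\sum k_s}\prod k_s!}\prod_s (u_{i_s}'-u_{i_s})^{k_s}\cdot K_{m}$, a sum of two families: one where $\partial_t$ falls on a power factor $(u_{i_s}'-u_{i_s})^{k_s}$, and one where it falls on $K_m$, producing the companion kernel $L_m:=\tfrac{1}{2}\bigl(e^{(u_0'-u_0)/2}+(-1)^m e^{-(u_0'-u_0)/2}\bigr)$.

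Next I would substitute the second relation of \eqref{backlund3} for every $(u_{i_s}'-u_{i_s})_t$ and every $(u_0'-u_0)_t$ that appears, turning the right-hand side into a finite bilinear expression in products of $K$'s and $L$'s whose parities are governed by the composition $(k_1,\dots,k_j)$. The hyperbolic addition formulas
\begin{equation*}
\sinh(\alpha+\beta)=\sinh\alpha\cosh\beta+\cosh\alpha\sinh\beta,\qquad \sinh(\alpha-\beta)=\sinh\alpha\cosh\beta-\cosh\alpha\sinh\beta,
\end{equation*}
applied with $\alpha=(u_0'+u_0)/2$ and $\beta=(u_0'-u_0)/2$, so that $\alpha+\beta=u_0'$ and $\alpha-\beta=u_0$, play the role that the sine/cosine angle-addition identities played implicitly in the Sine-Gordon proof: they split every product $K_{m_1}L_{m_2}\pm L_{m_1}K_{m_2}$ into a sum of pure $e^{\pm u_0}$ and pure $e^{\pm u_0'}$ contributions. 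After reindexing the combinatorial sums over compositions of $k$, the right-hand side separates cleanly into a piece depending only on $(u_k)$ and a piece depending only on $(u_k')$, each reading exactly as the right-hand side of \eqref{ddda112} evaluated at the corresponding variables. The hypothesis $(u_k)_{xt}=\text{RHS}$ then cancels the $u$-piece against the $\tfrac{1}{2}(u_k)_{xt}$ sitting inside $\tfrac{1}{2}(u_k'+u_k)_{xt}$, leaving $(u_k')_{xt}$ equal to the primed right-hand side of \eqref{ddda112}, as desired.

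The main obstacle is the combinatorial bookkeeping of the sign factors $(-1)^{\sum_t k_t}$ carried by $K$ and $L$: when the composition $(k_1,\dots,k_j)$ of $k$ gets distributed between the $u$- and $u'$-parts by the hyperbolic addition formulas above, one must verify that the parities recombine so that only the terms corresponding to valid compositions for the pure $u$- and pure $u'$-sums survive, with precisely the coefficients demanded by \eqref{ddda112}. This parity verification is the hyperbolic analogue of the $\bmod\,4$ reduction of $\tfrac{\pi}{2}\sum k_i$ that drove the Sine-Gordon argument, and once it is carried out the remaining manipulation is notationally heavy but essentially routine, following the Sine-Gordon calculation displayed above line for line.
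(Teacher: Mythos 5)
Your proposal follows exactly the route the paper intends: the paper gives no separate proof of this theorem, stating only that it is ``similar as the proof of Theorem~\ref{znbacklund}'' for the $Z_n$-Sine-Gordon case, and your plan is precisely that analogy carried out, with the kernel $K_m$ alternating between $\sinh$ and $\cosh$ playing the role of $\sin(\theta+\tfrac{\pi}{2}m)$ alternating between $\pm\sin\theta$ and $\pm\cos\theta$. The combinatorial bookkeeping you flag as the main obstacle is left equally implicit in the paper's own model computation, so your proposal matches the paper's argument in both approach and level of detail.
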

The proof of this theorem is similar as the proof of the Theorem \ref{znbacklund}. We will skip it here.

 To see the $Z_n$-Sine-Gordon systems clearly, we will take take $n=2$ and $n=3$ as examples in the following several sections.

\section{ The $Z_2$-Sine-Gordon equation and its B\"acklund transformation}
In this section, we will construct the $Z_2$-Sine-Gordon equation in the  commutative algebra $Z_2=\C[\Gamma]/(\Gamma^2)$ and $\Gamma=(\delta_{i,j+1})_{ij}\in gl(2,\C).$
By a direct computation using Taylor expansion, we can get the following lemma.

\begin{lemma}

The matrix form of $\sin{u}$ and $\cos{u}$ are as following through the summation of matrices
\begin{equation} \label{l}
\sin{\left(\begin{matrix}
u_0&0\\
u_1&u_0\end{matrix}
\right)}=\left(\begin{matrix}
\sin{u_0}&0\\
u_1\cos{u_0}&\sin{u_0}\end{matrix}
\right),
\end{equation}
\begin{equation} \label{m}
\cos{\left(\begin{matrix}
u_0&0\\
u_1&u_0\end{matrix}
\right)}=\left(\begin{matrix}
\cos{u_0}&0\\
-u_1\sin{u_0}&\cos{u_0}\end{matrix}
\right).
\end{equation}
\end{lemma}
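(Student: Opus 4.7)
The plan is to prove both formulas in one stroke from the Taylor series of $\sin$ and $\cos$, using only the nilpotency relation $\Gamma^{2}=0$ that characterises $Z_{2}$. This is the $n=1$ case of Lemma~1, but since the general combinatorics collapses dramatically in size~$2$, a direct self-contained derivation is shorter than specialising the earlier formula.

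First I would write $u = u_{0}E + u_{1}\Gamma$. Since $E$ commutes with $\Gamma$, the binomial theorem applies to $u^{k}$, and because $\Gamma^{2}=0$ all but the first two binomial terms vanish, yielding
\begin{equation*}
u^{k} \;=\; u_{0}^{k}E \;+\; k\,u_{0}^{k-1}u_{1}\Gamma, \qquad k\geq 1,
\end{equation*}
with $u^{0}=E$. This is the key computational input, and it is precisely where $Z_{2}$ simplifies the general $Z_{n}$ story.

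Next I would substitute this expression into the defining series $\sin u = \sum_{m\geq 0}\tfrac{(-1)^{m}}{(2m+1)!}u^{2m+1}$. The coefficient of $E$ is $\sum_{m}\tfrac{(-1)^{m}u_{0}^{2m+1}}{(2m+1)!}=\sin u_{0}$, while the factor $(2m+1)$ arising in the $\Gamma$ contribution cancels one factorial and leaves $u_{1}\sum_{m}\tfrac{(-1)^{m}u_{0}^{2m}}{(2m)!}=u_{1}\cos u_{0}$, giving \eqref{l}. An identical computation with $\cos u=\sum_{m\geq 0}\tfrac{(-1)^{m}}{(2m)!}u^{2m}$ produces $\cos u_{0}$ on the diagonal and $-u_{1}\sin u_{0}$ below it, which is \eqref{m}.

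There is no real obstacle: the only delicate points are the $k=0$ bookkeeping in the cosine series (where $u^{0}=E$ contributes nothing to the $\Gamma$ component) and the convergence of these matrix series, which is immediate because each $u^{k}$ contributes only two terms and the scalar series for $\sin u_{0}$ and $\cos u_{0}$ are entire. As an even shorter alternative one could simply invoke Lemma~1 with $n=1$: the only solution of $i_{1}k_{1}+\cdots+i_{j}k_{j}=1$ is $j=1$, $(i_{1},k_{1})=(1,1)$, which gives $b_{0}=\sin u_{0}$ and $b_{1}=u_{1}\sin(u_{0}+\pi/2)=u_{1}\cos u_{0}$, reproducing \eqref{l}; the cosine identity \eqref{m} follows by the same specialization once one records the analogous version of Lemma~1 for $\cos$, or simply by differentiating \eqref{l} in $u_{0}$.
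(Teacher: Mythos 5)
Your proof is correct and follows essentially the same route as the paper: the paper obtains this lemma ``by a direct computation using Taylor expansion,'' i.e.\ by expanding $\sin(u_0E+u_1\Gamma)$ and $\cos(u_0E+u_1\Gamma)$ as power series and using $\Gamma^2=0$, exactly as you do. Your explicit formula $u^k=u_0^kE+ku_0^{k-1}u_1\Gamma$ and the resummation into $\sin u_0$ and $u_1\cos u_0$ just makes the paper's ``direct computation'' precise.
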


In this matrix case, after a denotation as $u_0:=u,u_1:=v,$ we can derive the following new $Z_2$-Sine-Gordon equation
\begin{equation}
\begin{cases}
u_{xt}=\sin{u},\\
v_{xt}=v\cos{u}.
\end{cases}
\end{equation}
\begin{theorem}
The $Z_2$-Sine-Gordon equation
\begin{equation}\label{n}
\begin{cases}
u_{0xt}=\sin{u_0},\\
u_{1xt}=u_1\cos{u_0},
\end{cases}
\end{equation} has the following B\"acklund transformation
\begin{equation}\label{backlund}
\begin{cases}
(\frac{u_0'+u_0}{2})_{x}=a\sin{\frac{u_0'-u_0}{2}},\\
(\frac{u_1'+u_1}{2})_{x}=\frac{u_1'-u_1}{2}\cos{\frac{u_0'-u_0}{2}},
\\
(\frac{u_0'-u_0}{2})_{t}=\frac{1}{a}
\sin{\frac{u_0'+u_0}{2}},\\
(\frac{u_1'-u_1}{2})_{t}=
\frac{1}{a}\frac{u_1+u_1}{2}\cos{\frac{u_0'+u_0}{2}}.
\end{cases}
\end{equation}
\end{theorem}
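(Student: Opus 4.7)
The statement is the $n=2$ specialization of Theorem~\ref{znbacklund}, and could in principle be deduced from it; however, since only four relations are involved, a direct verification is cleaner. The plan is to cross-differentiate each of the four transformation equations by the missing variable (the two $x$-equations with respect to $t$, the two $t$-equations with respect to $x$), substitute the remaining relations to express everything in terms of $u_0,u_0',u_1,u_1'$, and then apply the classical product-to-sum identities. The right-hand sides will collapse into $\sin u_0'\pm\sin u_0$ and $u_1'\cos u_0'\pm u_1\cos u_0$, and adding and subtracting these will simultaneously yield that both $(u_0,u_1)$ and $(u_0',u_1')$ solve the $Z_2$-Sine-Gordon system \eqref{n}.

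For the $u_0$-component the calculation is the classical Sine-Gordon one. Differentiating $(u_0'+u_0)_x=2a\sin\frac{u_0'-u_0}{2}$ in $t$, substituting $(u_0'-u_0)_t=\frac{2}{a}\sin\frac{u_0'+u_0}{2}$, and applying $2\sin\alpha\cos\beta=\sin(\alpha+\beta)+\sin(\alpha-\beta)$ with $\alpha=\frac{u_0'+u_0}{2}$, $\beta=\frac{u_0'-u_0}{2}$, gives $(u_0'+u_0)_{xt}=\sin u_0'+\sin u_0$. The parallel computation starting from the third relation yields $(u_0'-u_0)_{xt}=\sin u_0'-\sin u_0$, and half-sum and half-difference recover $u_{0xt}'=\sin u_0'$ and $u_{0xt}=\sin u_0$ at once.

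The new content is in the $u_1$-component. Differentiating the second equation of \eqref{backlund} in $t$ produces one term carrying $(u_1'-u_1)_t$ and one carrying $(u_0'-u_0)_t$; the former is eliminated via the fourth equation, the latter via the third. The outcome is an expression of the form $(u_1'+u_1)\cos A\cos B-(u_1'-u_1)\sin A\sin B$ with $A=\frac{u_0'+u_0}{2}$ and $B=\frac{u_0'-u_0}{2}$. Regrouping this as $u_1'(\cos A\cos B-\sin A\sin B)+u_1(\cos A\cos B+\sin A\sin B)$ and applying $\cos A\cos B\mp\sin A\sin B=\cos(A\pm B)$ collapses everything to $u_1'\cos u_0'+u_1\cos u_0$, since $A+B=u_0'$ and $A-B=u_0$. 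A symmetric computation starting from the fourth equation gives $(u_1'-u_1)_{xt}=u_1'\cos u_0'-u_1\cos u_0$, after which addition and subtraction finish the proof. The main obstacle is purely bookkeeping: keeping the $a$-factors, the signs, and the pairings $u_1'\pm u_1$ aligned so that the product-to-sum identity actually matches up with the cross-coefficients; once that alignment is right, the remaining algebra is mechanical.
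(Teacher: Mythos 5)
Your proposal is correct and follows essentially the same route as the paper: differentiate the $x$-equations of \eqref{backlund} in $t$, substitute the $t$-equations, and collapse the result via product-to-sum identities to $\tfrac{1}{2}(\sin u_0'+\sin u_0)$ and $\tfrac{1}{2}(u_1'\cos u_0'+u_1\cos u_0)$. The only (harmless) variation is that the paper computes just this sum combination and then invokes the hypothesis that $(u_0,u_1)$ is a seed solution to peel off the primed equations, whereas you also cross-differentiate the $t$-equations and recover both systems by adding and subtracting; your reading also silently supplies the intended form of the transformation (the factor $a$ in the second equation and $u_1'+u_1$ in place of $u_1+u_1$ in the fourth), consistent with the matrix identity the paper actually uses in its proof.
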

\begin{proof}

To see the B\"acklund transformation of the $Z_n$-Sine-Gordon equation clearly, here we give the proof of the above theorem by a direct calculation in terms of matrices
\[\notag
&&\left(\begin{matrix}
\frac{u_0'+u_0}{2}&0\\
\frac{u_1'+u_1}{2}&\frac{u_0'+u_0}{2}\end{matrix}
\right)_{xt}\\ \notag
&=&a\left(\begin{matrix}
\sin{\frac{u_0'-u_0}{2}}&0\\
\frac{u_1'-u_1}{2}\cos{\frac{u_0'-u_0}{2}}&\sin{\frac{u_0'-u_0}{2}}\end{matrix}
\right)_{t}\\ \notag
&=&a\left(\begin{matrix}
\frac{u_{0t}^{'}-u_{0t}}{2}\cos{\frac{u_0'-u_0}{2}}&0\\
\frac{u_{1t}^{'}-u_{1t}}{2}\cos{\frac{u_0'-u_0}{2}}-\frac{u_1'-u_1}{2}\frac{u_{0t}^{'}-u_{0t}}{2}\sin{\frac{u_0^{'}-u_0}{2}}&\frac{u_0^{'}-u_0}{2}\cos{\frac{u_{0t}^{'}-u_{0t}}{2}}\end{matrix}
\right)\\ \notag
&=&a\left(\begin{matrix}
\frac{1}{a}\cos{\frac{u_0'-u_0}{2}}\sin{\frac{u_0'+u_0}{2}}&0\\
\frac{1}{a}(\frac{u_1'+u_1}{2}\cos{\frac{u_0'+u_0}{2}}\cos{\frac{u_0'-u_0}{2}}-\frac{u_1'-u_1}{2}\sin{\frac{u_0'+u_0}{2}}\sin{\frac{u_0'-u_0}{2}})&\frac{1}{a}\cos{\frac{u_0'-u_0}{2}}\sin{\frac{u_0'+u_0}{2}}\end{matrix}
\right)\\ \notag
&=&a\left(\begin{matrix}
\frac{1}{a}\cos{\frac{u_0'-u_0}{2}}\sin{\frac{u_0'+u_0}{2}}&0\\
\frac{1}{2a}[\frac{u_1'+u_1}{2}(\cos{u_0'}+\cos{u_0})+\frac{u_1'-u_1}{2}(\cos{u_0'}-\cos{u_0})]&\frac{1}{a}\cos{\frac{u_0'-u_0}{2}}\sin{\frac{u_0'+u_0}{2}}\end{matrix}
\right)\\ \notag
&=&\left(\begin{matrix}
\frac{1}{2}(\sin{u_0'}+\sin{u_0})&0\\
\frac{1}{2}(u_1\cos{u_0}+u_1'\cos{u_0'})&\frac{1}{2}(\sin{u_0'}+\sin{u_0})\end{matrix}
\right)\\ \notag
&=&\frac{1}{2}\left(\begin{matrix}
\sin{u_0}&0\\
u_1\cos{u_0}&\sin{u_0}\end{matrix}
\right)+\frac{1}{2}\left(\begin{matrix}
\sin{u_0'}&0\\
u_1'\cos{u_0'}&\sin{u_0'}\end{matrix}
\right).\]

Then because $u_0,u_1$ are  solutions of the $Z_2$-Sine-Gordon equation, i.e.
\[\notag
u_{0xt}=\sin{u_0},\ \
u_{1xt}=u_1\cos{u_0},\]
therefore we can derive the $u_0',u_1'$ are also solutions of the $Z_2$-Sine-Gordon equation
\[\notag
u'_{0xt}=\sin{u'_0},\ \
u'_{1xt}=u'_1\cos{u'_0}.\]
 Then we can say that the transformation \eqref{backlund} is the B\"acklund transformation of the $Z_2$-Sine-Gordon equation.
\end{proof}

 Next we give a simple example with seed solutions $u_0,u_1$; and we can get another new solution of the $Z_2$-Sine-Gordon equation by solving the first-order equation of the above  B\"acklund transformation \eqref{backlund}.

\begin{example}
Let $
u_0=
u_1=0,$
then the new solutions from the B\"acklund transformation \eqref{backlund} becomes
\begin{equation}\label{t}
\begin{cases}
(\frac{{u_0}^{'}}{2})_{x}=a\sin{\frac{u_0'}{2}},\\
(\frac{{u_0}^{'}}{2})_{t}=\frac{1}{a}\sin{\frac{u_0'}{2}},
\end{cases}
\end{equation}
\end{example}

\begin{equation}\label{u}
\begin{cases}
(\frac{{u_1}^{'}}{2})_{x}=a\frac{u_1'}{2}\cos{\frac{u_0'}{2}},\\
(\frac{{u_1}^{'}}{2})_{t}=\frac{1}{a}\frac{u_1'}{2}\cos{\frac{u_0'}{2}}.
\end{cases}
\end{equation}

Here  the solutions of two sets of equations will be derived by integral calculations and we get the following conclusions:
\[\notag
{u_0}^{'}=4\arctan{e^{ax+a^{-1}t}}.\]

According to the relation of $\cos{2a}$ and $\tan{a}$ as
\[\notag
\cos{2a}=\frac{1-\tan^{2}{a}}{1+\tan^{2}{a}},\]
we derive
\[\notag
\cos{\frac{u_0'}{2}}&=&\cos[{2\arctan{e^{ax+a^{-1}t}}}]\\ \notag
&=&\frac{1-e^{2ax+2a^{-1}t}}{1+e^{2ax+2a^{-1}t}}.\]
\\
So the second equation of  equations \eqref{u} becomes:
\begin{equation}\label{u2}
\begin{cases}
(\ln\frac{{u_1}^{'}}{2})_{x}=a\frac{1-e^{2ax+2a^{-1}t}}{1+e^{2ax+2a^{-1}t}},
\\
(\ln\frac{{u_1}^{'}}{2})_{t}=\frac{1}{a}\frac{1-e^{2ax+2a^{-1}t}}{1+e^{2ax+2a^{-1}t}},\\
\end{cases}
\end{equation}

which further leads to
\begin{equation*}
u_1^{'}=2c\frac{e^{(ax+a^{-1}t)}}{e^{(2ax+2a^{-1}t)}+1}.
\end{equation*}

\subsection{The $Z_2$-Sinh-Gordon equation and its B\"acklund transformation}
In this subsection, we will use the similar method in the last section to consider the B\"acklund transformation of the $Z_2$-Sinh-Gordon equation. Basing on the well-known Sinh-Gordon equation  as following
 \begin{equation} \label{bbbb}
u_{xt}= \sinh{u},
\end{equation}
we will consider the following $Z_2$-Sinh-Gordon equation
\begin{equation}\label{sinhG}
\begin{cases}
u_{xt}=\sinh{u},\\
v_{xt}=v\cosh{u}.
\end{cases}
\end{equation}
The $Z_2$-Sinh-Gordon equation
 has the following B\"acklund transformation
\begin{equation}
\begin{cases}
(\frac{u'+u}{2})_{x}=a\sinh{\frac{u'-u}{2}},\\
(\frac{v'+v}{2})_{x}=\frac{v'-v}{2}\cosh{\frac{u'-u}{2}},
\\
(\frac{u'-u}{2})_{t}=\frac{1}{a}
\sinh{\frac{u'+u}{2}},\\
(\frac{v'-v}{2})_{t}=
\frac{1}{a}\frac{v'+v}{2}\cosh{\frac{u'+u}{2}}.
\end{cases}
\end{equation}

\section{Nonlinear superposition formula of $Z_2$-Sine-Gordon equation}

  The above B\"acklund transformation in the last section gives us an important method to derive new solutions from known solutions. But sometimes it is not easy to solve the first order equation.  Besides B\"acklund transformations,  the non-linear superposition formula is also useful to derive new solutions by algebraic calculations.
 We suppose $(c_0,c_1)$ are solutions of $Z_2$-Sine-Gordon equation with respect to a parameter  $h_{1}$ from the seed solution $(b_0,b_1)$,
 $(d_0,d_1)$ are solutions of $Z_2$-Sine-Gordon equation with respect to a parameter  $h_{2}$ from the seed solution $(b_0,b_1)$. Also we
 suppose
 $(e_0,e_1)$ are solutions of $Z_2$-Sine-Gordon equation with respect to a parameter  $h_{2}$ from the seed solution $(c_0,c_1)$
and the $(e_0,e_1)$ should also be solutions of $Z_2$-Sine-Gordon equation with respect to a parameter  $h_{1}$ from the seed solution $(d_0,d_1)$.

Then the B\"acklund transformation formula can be expressed as
\begin{equation}\label{z}
\begin{cases}
\left(
\frac{c_{0}+b_{0}}{2}
\right)_{x}=h_{1}
\sin{\frac{c_{0}-b_{0}}{2}},\\
\left(
\frac{c_{1}+b_{1}}{2}
\right)_{x}=h_{1}
\frac{c_{1}-b_{1}}{2}\cos{\frac{c_{0}-b_{0}}{2}},
\end{cases}
\end{equation}

\begin{equation}\label{aa}
\begin{cases}
\left(
\frac{e_{0}+c_{0}}{2}
\right)_{x}=h_{2}
\sin{\frac{e_{0}-c_{0}}{2}},\\
\left(
\frac{e_{1}+c_{1}}{2}
\right)_{x}=h_{2}
\frac{e_{1}-c_{1}}{2}\cos{\frac{e_{0}-c_{0}}{2}},
\end{cases}
\end{equation}

\begin{equation}\label{bb}
\begin{cases}
\left(
\frac{d_{0}+b_{0}}{2}
\right)_{x}=h_{2}
\sin{\frac{d_{0}-b_{0}}{2}},\\
\left(
\frac{d_{1}+b_{1}}{2}
\right)_{x}=h_{2}
\frac{d_{1}-b_{1}}{2}\cos{\frac{d_{0}-b_{0}}{2}},
\end{cases}
\end{equation}
\begin{equation}\label{cc}
\begin{cases}
\left(
\frac{e_{0}+d_{0}}{2}
\right)_{x}=h_{1}
\sin{\frac{e_{0}-d_{0}}{2}},\\
\left(
\frac{e_{1}+d_{1}}{2}
\right)_{x}=h_{1}
\frac{e_{1}-d_{1}}{2}\cos{\frac{e_{0}-d_{0}}{2}}.
\end{cases}
\end{equation}

By adding up eq.\eqref{z} and eq.\eqref{cc}, we can get:
\[\notag
\left(\begin{matrix}
\frac{c_{0}+b_{0}+e_{0}+d_{0}}{2}&0\\
\frac{c_{1}+b_{1}+e_{1}+d_{1}}{2}&\frac{c_{0}+b_{0}+e_{0}+d_{0}}{2}\end{matrix}
\right)_{x}
&=&h_{1}\left(\begin{matrix}
\sin{\frac{c_{0}-b_{0}}{2}}+\sin{\frac{e_{0}-d_{0}}{2}}&0\\
\frac{c_{1}-b_{1}}{2}\cos{\frac{c_{0}-b_{0}}{2}}+\frac{e_{1}-d_{1}}{2}\cos{\frac{e_{0}-d_{0}}{2}}&\sin{\frac{c_{0}-b_{0}}{2}}+\sin{\frac{e_{0}-d_{0}}{2}}\end{matrix}
\right).\]

By adding up eq.\eqref{aa} and eq.\eqref{bb}, we can get:
\[\notag
\left(\begin{matrix}
\frac{e_{0}+c_{0}+d_{0}+b_{0}}{2}&0\\
\frac{e_{1}+c_{1}+d_{1}+b_{1}}{2}&\frac{e_{0}+c_{0}+d_{0}+b_{0}}{2}\end{matrix}
\right)_{x}
&=&h_{2}\left(\begin{matrix}
\sin{\frac{e_{0}-c_{0}}{2}}+\sin{\frac{d_{0}-b_{0}}{2}}&0\\
\frac{e_{1}-c_{1}}{2}\cos{\frac{e_{0}-c_{0}}{2}}+\frac{d_{1}-b_{1}}{2}\cos{\frac{d_{0}-b_{0}}{2}}&sin{\frac{e_{0}-c_{0}}{2}}+\sin{\frac{d_{0}-b_{0}}{2}}\end{matrix}
\right).\]

According to the above two equations, we can get a system of  algebraic equations
\begin{equation}\label{cc2}
\begin{cases}
h_{1}(\sin{\frac{c_{0}-b_{0}}{2}}+\sin{\frac{e_{0}-d_{0}}{2}})=h_{2}(\sin{\frac{e_{0}-c_{0}}{2}}+
\sin{\frac{d_{0}-b_{0}}{2}}),
\\
h_{1}(\frac{c_{1}-b_{1}}{2}\cos{\frac{c_{0}-b_{0}}{2}}+\frac{e_{1}-d_{1}}{2}\cos{\frac{e_{0}-d_{0}}{2}})
=h_{2}(\frac{e_{1}-c_{1}}{2}\cos{\frac{e_{0}-c_{0}}{2}}+\frac{d_{1}-b_{1}}{2}\cos{\frac{d_{0}-b_{0}}{2}}).
\end{cases}
\end{equation}

By finishing both sides of the first formula of the eq.\eqref{cc2}, we can derive the following nonlinear superposition formula of $Z_2$-Sine-Gordon equation

\begin{equation}\label{gg}
\begin{cases}
\tan{\frac{e_{0}-b_{0}}{4}}=\frac{h_{2}+h_{1}}{h_{2}-h_{1}}\tan{\frac{c_{0}-d_{0}}{4}},\\
h_{1}(\frac{c_{1}-b_{1}}{2}\cos{\frac{c_{0}-b_{0}}{2}}+\frac{e_{1}-d_{1}}{2}
\cos{\frac{e_{0}-d_{0}}{2}})=h_{2}(\frac{e_{1}-c_{1}}{2}\cos{\frac{e_{0}-c_{0}}{2}}+\frac{d_{1}-b_{1}}{2}\cos{\frac{d_{0}-b_{0}}{2}}).
\end{cases}
\end{equation}

From the known solutions $b_{0}$,$b_{1}$,$c_{0}$,$c_{1}$,$d_{0}$ and $d_{1}$, we can get the fourth solution $e_{0},e_{1}$ by eq.\eqref{gg}
instead of solving differential equations. From the following example, one can see it clearly.
\begin{example}
From the known solutions $b_{0}=b_{1}=0$,
$c_0=4\arctan{e^{h_{1}x+h_{1}^{-1}t}}, c_1=2c\frac{e^{(h_{1}x+h_{1}^{-1}t)}}{e^{(2h_{1}x+2h_{1}^{-1}t)}+1}$,
$d_0=4\arctan{e^{h_{2}x+h_{2}^{-1}t}},d_1=
2c\frac{e^{(h_{2}x+h_{2}^{-1}t)}}{e^{(2h_{2}x+2h_{2}^{-1}t)}+1}$,
by eq.\eqref{gg} we can get the following new solution :
\begin{equation}\label{gg2}
\begin{cases}
e_0=4\arctan\left(\frac{h_{2}+h_{1}}{h_{2}-h_{1}}\frac{\operatorname{sinh}\frac{1}{2}((h_{1}+h_{2})x+(h_{1}^{-1}+h_{2}^{-1})t)}
{\operatorname{cosh}\frac{1}{2}((h_{1}+h_{2})x+(h_{1}^{-1}+h_{2}^{-1})t)}\right),
\\
e_{1}=\frac{h_{1}c_{1}\cos{\frac{c_{0}}{2}}+h_{1}(c_{1}-d_{1})\cos{\frac{e_{0}-d_{0}}{2}}-h_{2}d_{1}\cos{\frac{d_{0}}{2}}}{h_{2}\cos{\frac{e_{0}-c_{0}}{2}}}
.
\end{cases}
\end{equation}
\end{example}

\begin{proof}
Here we give the proof of the above argument by a simple calculation,
\[\notag
\tan{\frac{e_{0}-b_{0}}{4}}&=&\frac{h_{2}+h_{1}}{h_{2}-h_{1}}\tan{\frac{c_{0}-d_{0}}{4}}\\ \notag
&=&\frac{h_{2}+h_{1}}{h_{2}-h_{1}}\tan(\arctan e^{h_{1}x+{h_{1}}^{-1}t}-\arctan e^{h_{2}x+{h_{2}}^{-1}t}).\]
Using the tangent formula:
\begin{equation*}
\tan({x-y})=\frac{\tan{x}-\tan{y}}{1+\tan{x}\tan{y}},
\end{equation*}
we can get
\[\notag
e_{0}=4\arctan\left(\frac{h_{2}+h_{1}}{h_{2}-h_{1}}\frac{\operatorname{sinh}\frac{1}{2}((h_{1}
+h_{2})x+(h_{1}^{-1}+h_{2}^{-1})t)}{\operatorname{cosh}\frac{1}{2}((h_{1}+h_{2})x+(h_{1}^{-1}+h_{2}^{-1})t)}\right).
\]
Now by considering
\begin{equation*}
h_{1}(\frac{c_{1}-b_{1}}{2}\cos{\frac{c_{0}-b_{0}}{2}}+\frac{e_{1}-d_{1}}{2}
\cos{\frac{e_{0}-d_{0}}{2}})=h_{2}(\frac{e_{1}-c_{1}}{2}\cos{\frac{e_{0}-c_{0}}{2}}+
\frac{d_{1}-b_{1}}{2}\cos{\frac{d_{0}-b_{0}}{2}}),
\end{equation*}
we can derive
\begin{equation*}
e_{1}=\frac{h_{1}c_{1}\cos{\frac{c_{0}}{2}}+h_{1}(c_{1}-d_{1})\cos{\frac{e_{0}-d_{0}}{2}}
-h_{2}d_{1}\cos{\frac{d_{0}}{2}}}{h_{2}\cos{\frac{e_{0}-c_{0}}{2}}}.
\end{equation*}
\end{proof}
In this way, by the algebraic iterated operation we can get many new solutions of the $Z_2$-Sine-Gordon equation.

\section{Lax equations of the $Z_2$-Sine-Gordon equation}
In the Lax equation of the original Sine-Gordon equation, in the \eqref{ii} we suppose
\begin{equation}\label{QQ}
g=\frac{i}{4}\left(\begin{matrix}
\cos{u_0}&0\\
-u_1\sin{u_0}&\cos{u_0}\end{matrix}
\right),
\end{equation}
\begin{equation}\label{rr}
b=c=\frac{i}{4}\left(\begin{matrix}
\sin{u_0}&0\\
u_1\cos{u_0}&\sin{u_0}\end{matrix}
\right),
\end{equation}
\begin{equation}\label{ss}
q=-r=-\frac{1}{2}\left(\begin{matrix}
u_{0x}&0\\
u_{1x}&u_{0x}\end{matrix}
\right).
\end{equation}
That is to say when:
\begin{equation}\label{tt}
M=\left(\begin{matrix}
-i\lambda&q\\
r&i\lambda\\
\end{matrix}
\right)=\left(\begin{matrix}
-i\lambda&0&-\frac{1}{2}u_{0x}&0\\
0&-i\lambda&-\frac{1}{2}u_{1x}&-\frac{1}{2}u_{0x}\\
\frac{1}{2}u_{0x}&0&i\lambda&0\\
\frac{1}{2}u_{1x}&\frac{1}{2}u_{0x}&0&i\lambda\\
\end{matrix}
\right),
\end{equation}
\begin{equation}\label{uu}
N=\left(\begin{matrix}
A&B\\
C&-A\\
\end{matrix}
\right)=\left(\begin{matrix}
\frac{i}{4\lambda}\cos{u_{0}}&0&\frac{i}{4\lambda}\sin{u_{0}}&0\\
-\frac{i}{4\lambda}u_{1}\sin{u_{0}}&\frac{i}{4\lambda}\cos{u_{0}}&\frac{i}{4\lambda}u_{1}\cos{u_{0}}&\frac{i}{4\lambda}\cos{u_{0}}\\
\frac{i}{4\lambda}\sin{u_{0}}&0&-\frac{i}{4\lambda}\cos{u_{0}}&0\\
\frac{i}{4\lambda}u_{1}\cos{u_{0}}&\frac{i}{4\lambda}\sin{u_{0}}&\frac{i}{4\lambda}u_{1}\sin{u_{0}}&-\frac{i}{4\lambda}\cos{u_{0}}\\
\end{matrix}
\right),
\end{equation}
 we can export the Lax equation of the $Z_2$-Sine-Gordon equation.

\section{The $Z_3$-Sine-Gordon equation and its B\"acklund transformation}
Now let us consider the case when $u$ takes values in the  commutative algebra $Z_3=\C[\Gamma]/(\Gamma^3)$ and $\Gamma=(\delta_{i,j+1})_{ij}\in gl(3,\C)$. The SG equation is generalized to the following $Z_3$-Sine-Gordon equation

\begin{equation} \label{bbba}
\begin{cases}
z_{xt}=\sin{z},&\\
v_{xt}=v\cos{z},&\\
w_{xt}=w\cos{z}-\frac{v^{2}\sin{z}}{2},&
\end{cases}
\end{equation}
and the following lemma holds.
\begin{lemma}

The following identity holds
\begin{equation} \label{bbbc}
\sin{\left(\begin{matrix}
z&0&0\\
v&z&0\\
w&v&z\end{matrix}
\right)}=\left(\begin{matrix}
\sin{z}&0&0\\
v\cos{z}&\sin{z}&0\\
w\cos{z}-\frac{v^{2}\sin{z}}{2}&v\cos{z}&\sin{z}\end{matrix}
\right),
\end{equation}
\begin{equation} \label{bbbd}
\cos{\left(\begin{matrix}
z&0&0\\
v&z&0\\
w&v&z\end{matrix}
\right)}=\left(\begin{matrix}
\cos{z}&0&0\\
1-v\sin{z}&\cos{z}&0\\
1-w\sin{z}-\frac{v^{2}\cos{z}}{2}&1-v\sin{z}&\cos{z}\end{matrix}
\right).
\end{equation}
\end{lemma}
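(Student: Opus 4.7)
The plan is to derive both identities by direct Taylor-series expansion of $\sin A$ and $\cos A$, where $A := zE + v\Gamma + w\Gamma^2$, exploiting the nilpotency $\Gamma^3 = 0$ recorded in \eqref{fffn}. Since $E$ commutes with $\Gamma$, the binomial theorem applies cleanly to every power $A^m$: expanding $(zE + v\Gamma + w\Gamma^2)^m$ and discarding all terms containing $\Gamma^k$ with $k \geq 3$ leaves contributions only to $E$, $\Gamma$, and $\Gamma^2$. Consequently both $\sin A$ and $\cos A$ are automatically of the lower-triangular Toeplitz shape appearing on the right-hand sides of \eqref{bbbc} and \eqref{bbbd}, and it only remains to identify three scalar entries in each case.

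For the sine identity \eqref{bbbc} I would simply invoke Lemma 1 with $n = 2$, $a_0 = z$, $a_1 = v$, $a_2 = w$, and enumerate the compositions $i_1 k_1 + \cdots + i_j k_j = k$ for $k = 0, 1, 2$. The case $k=0$ yields $b_0 = \sin z$; the case $k=1$ admits the single composition $(i_1,k_1) = (1,1)$, giving $b_1 = v\sin(z+\pi/2) = v\cos z$; the case $k=2$ admits the two compositions $(2,1)$ and $(1,2)$, giving $b_2 = w\cos z - \tfrac{v^2}{2}\sin z$. These match the three distinct entries of the right-hand side of \eqref{bbbc} exactly.

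For the cosine identity \eqref{bbbd} no earlier lemma applies verbatim, so I would redo the Taylor expansion from scratch. Writing $\cos A = \sum_{n\ge 0} (-1)^n A^{2n}/(2n)!$, I would extract the $E$, $\Gamma$, and $\Gamma^2$ components of each $A^{2n}$ by the multinomial theorem. The $E$-component is $z^{2n}$, so it sums to $\cos z$; the $\Gamma$-component of $A^{2n}$ is $\binom{2n}{1} z^{2n-1} v$, yielding a single-derivative series; and the $\Gamma^2$-component splits into a linear $w$-piece arising from $\binom{2n}{1} z^{2n-1} w$ and a quadratic $v^2$-piece arising from $\binom{2n}{2} z^{2n-2} v^2$. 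After reassembly these two series are identified with the derivative-like expressions $\cos^{(1)}(z)$ and $\cos^{(2)}(z)$, giving the claimed scalar entries.

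The main bookkeeping obstacle will be the $\Gamma^2$ entry: the alternating signs $(-1)^n$ together with the binomial factors $\binom{2n}{2}$ and the $(2n)!$ denominators must be aligned so that the $v^2$ series telescopes into a clean $\cos z$ (or $\sin z$) coefficient, and one must verify that the three-way combination of $w$-, $v^2$-, and constant contributions matches the right-hand side of \eqref{bbbd} precisely. Everything else is a routine expansion.
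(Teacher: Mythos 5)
Your treatment of the sine identity \eqref{bbbc} is correct and is essentially the route the paper itself implies: the paper states this lemma without proof, and specializing Lemma 1 to $n=2$, $a_0=z$, $a_1=v$, $a_2=w$ with the compositions you enumerate does give $b_0=\sin z$, $b_1=v\cos z$, and $b_2=w\cos z-\tfrac{v^2}{2}\sin z$.

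The cosine half, however, contains a genuine gap: the verification you defer to the end cannot succeed, because the right-hand side of \eqref{bbbd} as printed is false. Carrying out exactly the expansion you describe (or, more quickly, writing $A=zE+N$ with $N=v\Gamma+w\Gamma^2$, so that $N^2=v^2\Gamma^2$, $N^3=0$, and $\cos A=\cos z\,\cos N-\sin z\,\sin N=\cos z\,\bigl(E-\tfrac{v^2}{2}\Gamma^2\bigr)-\sin z\,\bigl(v\Gamma+w\Gamma^2\bigr)$) gives the $\Gamma$-entry $-v\sin z$ and the $\Gamma^2$-entry $-w\sin z-\tfrac{v^2}{2}\cos z$, with no constant term: the series $\sum_n(-1)^nA^{2n}/(2n)!$ contributes a constant only to the $E$-component, as one sees at once by setting $z=v=w=0$, where the left side is the identity matrix while the printed right side has $1$'s below the diagonal. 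Your closing remark about aligning the ``$w$-, $v^2$-, and constant contributions'' suggests you expect the ``$1-$'' prefixes in \eqref{bbbd} to emerge from the bookkeeping; they will not. A complete proof must either correct the statement to $\cos A=\cos z\,E-v\sin z\,\Gamma-\bigl(w\sin z+\tfrac{v^2}{2}\cos z\bigr)\Gamma^2$ or concede that \eqref{bbbd} cannot be proven as written. (The same misprint propagates into the matrix $g$ in \eqref{bbbu}.)
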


 Similarly, by the B\"acklund transformation of the SG equation (eq.\eqref{b} and eq.\eqref{c}),
 we can get another new solution of the $Z_3$-SG equation.

Let $
z=
v=
w=0,$
then the B\"acklund transformation will lead to the conditions of new solutions
\begin{equation}\label{bbbj}
\begin{cases}
(\frac{{z}^{'}}{2})_{x}=a\sin{\frac{z'}{2}},\\
(\frac{{z}^{'}}{2})_{t}=\frac{1}{a}\sin{\frac{z'}{2}},\\
\end{cases}
\end{equation}

\begin{equation}\label{bbbk}
\begin{cases}
(\frac{{v}^{'}}{2})_{x}=a\frac{v'}{2}\cos{\frac{z'}{2}},\\
(\frac{{v}^{'}}{2})_{t}=\frac{1}{a}\frac{v'}{2}\cos{\frac{z'}{2}},
\end{cases}
\end{equation}

\begin{equation}\label{bbbl}
\begin{cases}
(\frac{{w}^{'}}{2})_{x}=a(\frac{w'}{2}\cos{\frac{z'}{2}}-\frac{v'^{2}}{8}\sin{\frac{z'}{2}}),\\
(\frac{{w}^{'}}{2})_{t}=\frac{1}{a}(\frac{w'}{2}\cos{\frac{z'}{2}}-\frac{v'^{2}}{8}\sin{\frac{z'}{2}}),
\end{cases}
\end{equation}

Then the following  new solutions can be derived
\begin{equation}\label{bbbj}
\begin{cases}z^{'}=4\arctan{e^{ax+a^{-1}t}},\\
   v^{'}=2c\frac{e^{(ax+a^{-1}t)}}{e^{(2ax+2a^{-1}t)}+1},\\
  w^{'}=-(\frac{1}{(e^{2ax+2a^{-1}t)}+1}+c)\frac{e^{(ax+a^{-1}t)}}{e^{(2ax+2a^{-1}t)}+1}.
  \end{cases}
\end{equation}

Similar to the second order matrix, by supposing
\begin{equation}\label{bbbu}
g=\frac{i}{4}\left(\begin{matrix}
\cos{z}&0&0\\
1-v\sin{z}&\cos{z}&0\\
1-w\sin{z}-\frac{v^{2}\cos{z}}{2}&1-v\sin{z}&\cos{z}\end{matrix}
\right),
\end{equation}
\begin{equation}\label{bbbv}
b=c=\frac{i}{4}\left(\begin{matrix}
\sin{z}&0&0\\
v\cos{z}&\sin{z}&0\\
w\cos{z}-\frac{v^{2}\sin{z}}{2}&v\cos{z}&\sin{z}\end{matrix}
\right),
\end{equation}
\begin{equation}\label{ss}
q=-r=-\frac{1}{2}\left(\begin{matrix}
z_{x}&0&0\\
v_{x}&z_{x}&0\\
w_{x}&v_{x}&z_{x}\end{matrix}
\right),
\end{equation}
 one can derive the Lax equation of  the $Z_3$-Sin-Gordon equation \eqref{bbba}.

{\bf {Acknowledgements:}}
Chuanzhong Li  is  supported by the National Natural Science Foundation of China under Grant No. 11571192 and K. C. Wong Magna Fund in
Ningbo University.


\vskip20pt


\begin{thebibliography}{AAA1}
\frenchspacing

\bibitem{Josephson}E. Goldobin, A. Sterck, T. Gaber, D. Koelle and R. Kleiner, Dynamics of semifluxons in Nb long Josephson $0$-$\phi$ junctions, Phys. Rev. Lett. 92(2004), 057005.

\bibitem{Ivancevic}  V. G. Ivancevic, T. T. Ivancevic, Sine-Gordon solitons, kinks and
breathers as physical models of nonlinear excitations in living cellular
structures, J. Geom. Symmetry Phys. 31 (2013) 1-56.

\bibitem{Crystal}G. A. Maugin, Nonlinear Waves in Elastic Crystals, Oxford University Press, UK, 1999.


\bibitem{Kundu}A. Kundu, Shape changing and accelerating solitons in the integrable variable mass Sine-Gordon model, Phys. Rev. Lett. 99(2007), 154101.

\bibitem{Mikhailov} A. V. Mikhailov,
G. Papamikos,
J. P. Wang, Dressing method for the vector Sine-Gordon equation and its soliton interactions, Physica D: Nonlinear Phenomena,
325(2016), 53-62.




\bibitem{zuojmp}Ian A. B. Strachan, D. F. Zuo, Integrability of the Frobenius algebra-valued
Kadomtsev-Petviashvili hierarchy, J. Math. Phys. 56(2015), 113509.


\bibitem{Fordy} A. P. Fordy, A. G. Reyman, and M. A. Semenov-Tian-Shansky, Classical r-matrices and compatible Poisson brackets for
coupled KdV systems, Lett. Math. Phys. 17(1989), 25-29.


\bibitem{vectorkdv}  R. Hirota, X. B. Hu, and X. Y. Tang, A vector potential KdV equation and vector Ito equation: Soliton solutions, bilinear
B\"acklund transformations and Lax pairs, J. Math. Anal. Appl. 288(2003), 326-348.

\bibitem{kdv} W. X. Ma, B. Fuchssteiner, The bi-Hamiltonian structure of the perturbation equations of KdV hierarchy, Phys. Lett.
A 213(1996), 49-55.


\bibitem{EZTH} C. Z. Li, J. S. He, The extended $Z_N$-Toda hierarchy,
Theor. Math. Phys.  185(2015), 1614-1635.

\bibitem{ZBKP}C. Z. Li, Gauge transformation and symmetries of
the commutative multicomponent BKP
hierarchy, J. Phys. A: Math. Theor. 49(2016), 015203.

\bibitem{SG} R. K. Dodd, R. K. Bullough, B\"acklund Transformations for the Sine-Gordon Equations, Proc. Roy. Soc. Lond. A, 351(1976), 499-523.


\bibitem{SHG}A. Grauel, Sinh-Gordon equation, Painlev\'{e} property and B\"acklund transformation,
 Physica A, 132(1985), 557-568.

\end{thebibliography}
\end{document}